\newcommand{\Z}{\mathbb{Z}}
\newcommand{\cL}{\mathcal{L}}
\newcommand{\cQ}{\mathcal{Q}}
\newcommand{\wip}{\widetilde{p}}
\newcommand{\wv}{\widetilde{v}}
\newcommand{\wq}{\widetilde{q}}
\newcommand{\whp}{\widehat{p}}
\newcommand{\whv}{\widehat{v}}
\newcommand{\whq}{\widehat{q}}
\renewcommand*\env@cases[1][1.2]{%
  \let\@ifnextchar\new@ifnextchar
  \left\lbrace
  \def\arraystretch{#1}%
  \array{@{}l@{\quad}l@{}}%
}
\newtheorem{theo}{Theorem}
\newtheorem{prop}[theo]{Proposition}
\theoremstyle{definition}
\theoremstyle{remark}
\begin{document}
\allowdisplaybreaks

\title{Billiards in confocal quadrics as \\ a pluri-Lagrangian system}
\author{Yuri~B.~Suris}

\maketitle

\begin{center}
{\small{
 Institut f\"ur Mathematik, MA 7-2, Technische Universit\"at Berlin,\\
Stra{\ss}e des 17. Juni 136, 10623 Berlin, Germany\\
E-mail: \url{suris@math.tu-berlin.de}
}}
\end{center}

\begin{abstract}
We illustrate the theory of one-dimensional pluri-Lagrangian systems with the example of commuting billiard maps in confocal quadrics.
\end{abstract}

\section{Introduction}

The aim of this note is to illustrate some of the issues of the theory of one-dimensional pluri-Lagrangian systems, developed recently in \cite{S}, with a well known example of billiards in quadrics \cite{M, T}. In Section \ref{sect: pluri} we recall the main positions of the theory of pluri-Lagrangian systems, including a novel explanation of the so called spectrality property introduced in \cite{KS}. Then in Section \ref{sect: billiard} we recall some basic facts about billiards in quadrics. The main new contribution is contained in Section \ref{sect: main}, where we use the spectrality property to derive the full set of integrals of motion for commuting billiard maps in confocal quadrics.

\section{Reminder on discrete 1-dimensional pluri-Lagrangian systems}
\label{sect: pluri}

Suppose we are given a 1-parameter family of pairwise commuting symplectic maps $F_\lambda: T^*M\to T^*M$, $F_\lambda(q,p)=(\wq,\wip)$, possessing generating (Lagrange) functions $L(q,\wq;\lambda)$, so that 
\begin{equation}\label{eq: BT1}
F_{\lambda}:\
p=-\frac{\partial L\left(q,\widetilde{q};\lambda\right)}{\partial q},\quad
\widetilde{p}=\frac{\partial L\left(q,\widetilde{q};\lambda\right)}{\partial \widetilde{q}}.
\end{equation}
When considering a second such map, say $F_{\mu}: T^*M\to T^*M$, we will denote its action by a hat: $F_\mu(q,p)=(\whq,\whp)$,
\begin{equation}\label{eq: BT2}
F_\mu:\
p=-\frac{\partial L\left(q,\widehat{q};\mu\right)}{\partial q},\quad
\widehat{p}=\frac{\partial L\left(q,\widehat{q};\mu\right)}{\partial \widehat{q}}.
\end{equation}
The commutativity of these maps allows us to define, for any $(q_0,p_0)\in T^*M$, the function $(q,p):\mathbb Z^2\to T^*M$ by setting 
\[
(q(n+e_1),p(n+e_1))=F_\lambda\big(q(n),p(n)\big), \quad (q(n+e_2),p(n+e_2))=F_\mu\big(q(n),p(n)\big), \quad \forall n\in\mathbb Z^2,
\]
see Figure \ref{Fig: consistency}, (a). Thus justifies our general short-hand notation for functions on $\mathbb Z^2$:  if $q$ stands for $q(n)$, then $\widetilde{q}$ stands for $q(n+e_1)$, while $\widehat{q}$ stands for $q(n+e_2)$.  We introduce a discrete 1-form $L$ on $\Z^{2}$ by setting (slightly abusing the notations) $L(n,n+e_1)=L(q,\widetilde{q};\lambda)$, respectively $L(n,n+e_2)=L(q,\widehat{q};\mu)$. 

From \eqref{eq: BT1}, \eqref{eq: BT2} we easily see that the following \emph{corner equations} hold true everywhere on $\mathbb Z^2$:
\begin{align}
\label{eq: E}\tag{$E$}
&\frac{\partial L(q,\wq;\lambda)}{\partial q}-
\frac{\partial L(q,\whq;\mu)}{\partial q}=0,
\\
\label{eq: E1}\tag{$E_1$}
&\frac{\partial L(q,\wq;\lambda)}{\partial \wq}+
\frac{\partial L(\wq,\widehat{\wq};\mu)}{\partial \wq}=0,
\\
\label{eq: E2}\tag{$E_2$}
&\frac{\partial L(q,\whq:\mu)}{\partial \whq}+
\frac{\partial L(\whq,\widehat{\wq};\lambda)}{\partial \whq}=0,
\\
\label{eq: E12}\tag{$E_{12}$}
&\frac{\partial L(\whq,\widehat{\wq};\lambda)}{\partial \widehat{\wq}}-
\frac{\partial L(\wq,\widehat{\wq};\mu)}{\partial \widehat{\wq}} = 0.
\end{align}
These four corner equations \eqref{eq: E}--\eqref{eq: E12} correspond to the four vertices of an elementary square of the lattice $\mathbb Z^2$, as on Figure \ref{Fig: consistency}, (b).
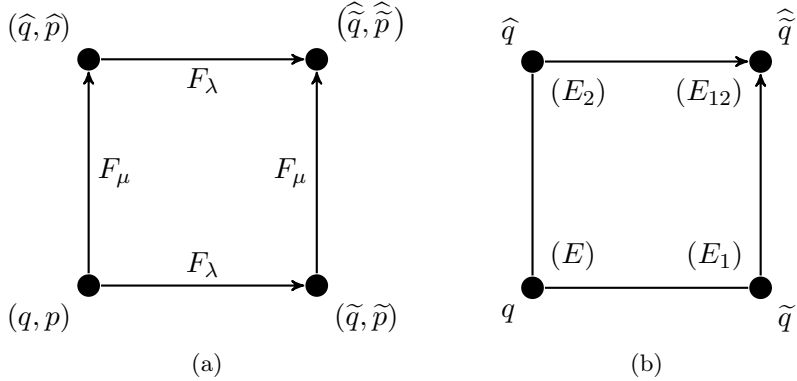
\begin{figure}[htbp]
\centering
\subfloat[]{
\begin{tikzpicture}[auto,scale=1.5,>=stealth',inner sep=3pt]
   \node (x) at (0,0) [circle,fill,thick,{label=-135:$\left(q,p\right)$}] {};
   \node (x1) at (2,0) [circle,fill,thick,{label=-45:$\left(\wq,\wip\right)$}] {};
   \node (x2) at (0,2) [circle,fill,thick,{label=135:$\left(\whq,\whp\right)$}] {};
   \node (x12) at (2,2) [circle,fill,thick,{label=45:$\big(\widehat{\wq},\widehat{\wip}\, \big)$}] {};
   \draw [thick,->] (x) to node {$F_\lambda$} (x1);
   \draw [thick,->] (x) to node [swap] {$F_\mu$} (x2);
   \draw [thick,->] (x2) to node [swap]{$F_\lambda$} (x12);
   \draw [thick,->] (x1) to node {$F_\mu$} (x12);
\end{tikzpicture}
}\qquad
\subfloat[]{
\begin{tikzpicture}[auto,scale=1.5,>=stealth',inner sep=3pt]
   \node (x) at (0,0) [circle,fill,thick,label=-135:$q$,label=45:$\left(E\right)$] {};
   \node (x1) at (2,0) [circle,fill,thick,label=135:$\left(E_1\right)$,label=-45:$\wq$] {};
   \node (x2) at (0,2) [circle,fill,thick,label=-45:$\left(E_2\right)$,label=135:$\whq$] {};
   \node (x12) at (2,2) [circle,fill,thick,label=45:$\widehat{\wq}$,label=-135:$\left(E_{12}\right)$] {};
   \draw [thick,->] (x) to (x1) to (x12);
   \draw [thick,->] (x) to (x2) to (x12);
\end{tikzpicture}
}
\caption{(a) Definig $(q,p):\mathbb Z^2\to T^*M$ for two commuting maps $F_\lambda$ and $F_\mu$. (b) Four corner equations. Their consistency means the following. If we start with the data $q$, $\wq$, $\whq$ related by the corner equation \eqref{eq: E}, and solve the corner equations \eqref{eq: E1} and \eqref{eq: E2} for $\widehat{\wq}$, then the two values of $\widehat{\wq}$ coincide identically and satisfy the corner equation \eqref{eq: E12}.}
\label{Fig: consistency}
\end{figure}

The corner equations tell us that any solution $q:\mathbb Z^2\to M$ delivers a critical point to the action functional 
\[
S_{\Gamma}=\sum_{\sigma\in\Gamma}L(\sigma)
\]
\emph{for any directed path} $\Gamma$ in $\Z^2$ (under variations that fix the fields at the endpoints of the path $\Gamma$). In other words, the field $q:\Z^2\to M$ solves the \emph{pluri-Lagrangian problem} for the Lagrangian 1-form $L$ \cite{S}.

\begin{theo}\label{th: almost closed}
The value $dL(\sigma)$ for all elementary squares $\sigma=(n,n+e_1,n+e_1+e_2,n+e_2)$  is constant on solutions of the system of corner equations \eqref{eq: E}--\eqref{eq: E12}:
\begin{equation}\label{eq: BT closure}
   dL(\sigma):= L(q,\wq;\lambda)+L(\wq,\widehat{\wq};\mu)-L(\whq,\widehat{\wq};\lambda)-L(q,\whq;\mu)=c(\lambda,\mu).
\end{equation}
\end{theo}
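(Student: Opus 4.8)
The plan is to regard the expression $dL(\sigma)$ as a single function
\[
\Phi\bigl(q,\wq,\whq,\widehat{\wq}\bigr)
 := L(q,\wq;\lambda)+L(\wq,\widehat{\wq};\mu)-L(\whq,\widehat{\wq};\lambda)-L(q,\whq;\mu)
\]
of the four corner fields, treated \emph{a priori as independent} points of $M$, and to compute its differential. Differentiating and collecting the coefficients of $dq$, $d\wq$, $d\whq$, $d\widehat{\wq}$ gives
\begin{align*}
d\Phi ={}& \Bigl(\frac{\partial L(q,\wq;\lambda)}{\partial q}-\frac{\partial L(q,\whq;\mu)}{\partial q}\Bigr)\,dq
        +\Bigl(\frac{\partial L(q,\wq;\lambda)}{\partial \wq}+\frac{\partial L(\wq,\widehat{\wq};\mu)}{\partial \wq}\Bigr)\,d\wq\\
       &{}-\Bigl(\frac{\partial L(q,\whq;\mu)}{\partial \whq}+\frac{\partial L(\whq,\widehat{\wq};\lambda)}{\partial \whq}\Bigr)\,d\whq
        +\Bigl(\frac{\partial L(\wq,\widehat{\wq};\mu)}{\partial \widehat{\wq}}-\frac{\partial L(\whq,\widehat{\wq};\lambda)}{\partial \widehat{\wq}}\Bigr)\,d\widehat{\wq}.
\end{align*}

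The key — and essentially the whole — observation is that these four coefficients are, up to sign, precisely the left-hand sides of the four corner equations: the $dq$-coefficient is the left side of \eqref{eq: E}, the $d\wq$-coefficient is the left side of \eqref{eq: E1}, the $d\whq$-coefficient is minus the left side of \eqref{eq: E2}, and the $d\widehat{\wq}$-coefficient is minus the left side of \eqref{eq: E12}. Hence $d\Phi=0$ identically at every point $(q,\wq,\whq,\widehat{\wq})$ of $M^4$ at which the corner equations hold.

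Now let $q:\Z^2\to M$ be any solution of \eqref{eq: E}--\eqref{eq: E12}. For every elementary square $\sigma$ its four corner values satisfy the corner equations by hypothesis, so the computation above shows that the restriction of $d\Phi$ to the manifold of solutions vanishes. (Here it is convenient, not obstructive, that the corners of a square on a solution are not independent: the ambient differential $d\Phi$ already vanishes, so its restriction to the tangent space of the solution manifold vanishes for free, with no transversality or rank condition needed.) Consequently $dL(\sigma)$ is constant along any continuous deformation of the solution. Since a solution is uniquely determined by its Cauchy datum $(q(0),p(0))\in T^*M$, the solution space is a copy of $T^*M$; assuming $M$ connected, $dL(\sigma)$ is therefore the same number for all solutions.

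It remains to see that this number does not depend on the square $\sigma$ either. The corner equations are invariant under lattice translations, and $dL$ evaluated on $\sigma=(n,n+e_1,n+e_1+e_2,n+e_2)$ along the solution $q$ equals $dL$ evaluated on the base square along the shifted solution $q(\,\cdot+n)$; by the previous paragraph this is the same constant. The constant may still depend on the two parameters entering the construction, giving $dL(\sigma)=c(\lambda,\mu)$ as claimed. The only point requiring mild care is thus the connectedness used to upgrade ``locally constant'' to ``constant''; for disconnected $M$ the value could a priori vary from component to component of $T^*M$.
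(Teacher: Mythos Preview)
Your proof is correct and follows essentially the same approach as the paper's own proof: regard $dL(\sigma)$ as a function of the four corner fields on $M^4$, compute its partial derivatives, and observe that they coincide (up to sign) with the left-hand sides of the corner equations \eqref{eq: E}--\eqref{eq: E12}, so the gradient vanishes on the solution submanifold. Your additional remarks on connectedness of the solution manifold and on $\sigma$-independence via lattice translation are a bit more careful than the paper's terse argument, but the core idea is identical.
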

\begin{proof}The expression on the left-hand side of equation (\ref{eq: BT closure}) is a function on the manifold of solutions of the system of corner equations. The manifold of solutions is of dimension $2\dim M$, as it can be parametrized by $(q,p)$ or by $(q,\wq)$. It is enough to prove that $\partial (dL(\sigma))/\partial q=0$ and $\partial (dL(\sigma))/\partial \wq=0$. We prove a stronger statement: if one considers $dL(\sigma)$ as a function on the manifold of dimension $4\dim M$, parametrized by $q$, $\wq$, $\whq$, and $\widehat{\wq}$, then the gradient of this function vanishes on the submanifold of solutions of corner equations, of dimension $2\dim M$. But this is obvious, since vanishing of the partial derivatives of $dL(\sigma)$ with respect to its 4 arguments is nothing but the corresponding corner equations. 
\end{proof}

\begin{theo}\label{th: spectrality}
For a family $F_\lambda$ of commuting symplectic maps, the discrete pluri-Lagrangian 1-form $L$ is closed on solutions,
$dL=c(\lambda,\mu)=0$, if and only if $\partial L(q,\wq;\lambda)/\partial \lambda$ is a common integral of motion for all $F_{\mu}$.
\end{theo}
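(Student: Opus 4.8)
The plan is to compute the partial derivative of the quantity $c(\lambda,\mu)=dL(\sigma)$ from Theorem~\ref{th: almost closed} with respect to the parameter $\lambda$, and to recognize the answer as the difference between the candidate integral and its pull-back under $F_\mu$. By Theorem~\ref{th: almost closed}, the expression \eqref{eq: BT closure}, evaluated on the orbit through a point $(q,p)\in T^*M$ --- so that $\wq$ is the configuration component of $F_\lambda(q,p)$, $\whq$ that of $F_\mu(q,p)$, and $\widehat{\wq}$ that of $F_\lambda F_\mu(q,p)=F_\mu F_\lambda(q,p)$ --- equals $c(\lambda,\mu)$ and in particular does not depend on $(q,p)$. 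First I would fix $(q,p)$ and $\mu$, and differentiate the identity
\[
c(\lambda,\mu)=L(q,\wq;\lambda)+L(\wq,\widehat{\wq};\mu)-L(\whq,\widehat{\wq};\lambda)-L(q,\whq;\mu)
\]
with respect to $\lambda$, using the chain rule. Here $q$ is fixed and $\whq$ carries no $\lambda$-dependence (it is the configuration part of $F_\mu(q,p)$, and $F_\mu$ does not involve $\lambda$), while $\wq$ and $\widehat{\wq}$ do depend on $\lambda$.

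The key step is that the terms containing $\partial\wq/\partial\lambda$ and $\partial\widehat{\wq}/\partial\lambda$ drop out by virtue of the corner equations. The coefficient of $\partial\wq/\partial\lambda$ is $\partial L(q,\wq;\lambda)/\partial\wq+\partial L(\wq,\widehat{\wq};\mu)/\partial\wq$, which vanishes by \eqref{eq: E1}; the coefficient of $\partial\widehat{\wq}/\partial\lambda$ is $\partial L(\wq,\widehat{\wq};\mu)/\partial\widehat{\wq}-\partial L(\whq,\widehat{\wq};\lambda)/\partial\widehat{\wq}$, which vanishes by \eqref{eq: E12}. What survives is only the explicit $\lambda$-dependence of the two edges of type $\lambda$:
\[
\frac{\partial c(\lambda,\mu)}{\partial\lambda}=\frac{\partial L(q,\wq;\lambda)}{\partial\lambda}-\frac{\partial L(\whq,\widehat{\wq};\lambda)}{\partial\lambda}.
\]
Now I would observe that, writing $I_\lambda(q,p):=\partial L(q,\wq;\lambda)/\partial\lambda$ with $\wq$ the configuration part of $F_\lambda(q,p)$, the first term on the right is $I_\lambda$ itself, while the second is $I_\lambda$ evaluated at $F_\mu(q,p)$, since $(\whq,\widehat{\wq})$ is an edge of type $\lambda$ issued from the point $F_\mu(q,p)$. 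Hence
\[
\frac{\partial c(\lambda,\mu)}{\partial\lambda}=I_\lambda-I_\lambda\circ F_\mu .
\]

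Both implications then follow from this single identity. If $dL=c(\lambda,\mu)\equiv 0$, its left-hand side vanishes for all $\lambda,\mu$, so $I_\lambda=I_\lambda\circ F_\mu$ for all $\mu$, i.e.\ $\partial L(q,\wq;\lambda)/\partial\lambda$ is a common integral of motion. Conversely, if $\partial L(q,\wq;\lambda)/\partial\lambda$ is an integral of every $F_\mu$, then $\partial c(\lambda,\mu)/\partial\lambda=0$ identically, so $c(\lambda,\mu)$ is independent of $\lambda$; setting $\lambda=\mu$ and noting that $dL(\sigma)=0$ whenever the two parameters coincide (then $\whq=\wq$ and the four terms in \eqref{eq: BT closure} cancel in pairs), we get $c(\lambda,\mu)=c(\mu,\mu)=0$.

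The step that I expect to require the most care is the bookkeeping in the chain rule: one must keep track of which of $q,\wq,\whq,\widehat{\wq}$ depend on $\lambda$ (only $\wq$ and $\widehat{\wq}$ do), and one must justify that the quadruple $(q,\wq,\whq,\widehat{\wq})$ entering \eqref{eq: BT closure} is a smooth function of $(q,p)$ --- which is exactly the parametrization of the solution manifold used in the proof of Theorem~\ref{th: almost closed} --- so that differentiating at fixed $(q,p)$ is legitimate and $I_\lambda$ is a well-defined function on $T^*M$. Once this is set up, the cancellations via \eqref{eq: E1} and \eqref{eq: E12} are automatic, and the identification of the two surviving terms with $I_\lambda$ and $I_\lambda\circ F_\mu$ is immediate from the implicit definition \eqref{eq: BT1} of $F_\lambda$.
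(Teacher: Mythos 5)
Your argument is correct and takes essentially the same route as the paper: differentiate the constant $c(\lambda,\mu)$ of Theorem \ref{th: almost closed} with respect to $\lambda$, cancel all implicit-dependence terms by the corner equations, and recognize the surviving difference \eqref{eq: Lagr int} as the statement that $\partial L(q,\wq;\lambda)/\partial\lambda$ is an integral of motion for $F_\mu$. The only cosmetic differences are your choice of parametrization of the solution manifold (you fix $(q,p)$, so $\wq$ and $\widehat{\wq}$ carry the $\lambda$-dependence and the cancellations use \eqref{eq: E1}, \eqref{eq: E12}, while the paper fixes $(q,\wq)$), and that you close the equivalence via $c(\mu,\mu)=0$ instead of the paper's skew-symmetry $c(\lambda,\mu)=-c(\mu,\lambda)$ --- equivalent observations.
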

\begin{proof} Clearly, the possible dependence of the constant $c(\lambda,\mu)$ on the parameters $\lambda$, $\mu$ is skew-symmetric: $c(\lambda,\mu)=- c(\mu,\lambda)$. Therefore, $c(\lambda,\mu)=0$ is equivalent to $\partial c(\lambda,\mu)/\partial \lambda=0$, that is, to
\begin{equation}\label{eq: Lagr int}
 \frac{\partial L(q,\wq;\lambda)}{\partial \lambda}
 -\frac{\partial L(\whq,\widehat{\wq};\lambda)}{\partial \lambda}=0
\end{equation}
(see equation (\ref{eq: BT closure}); terms involving $\partial\whq/\partial\lambda$ and $\partial\widehat{\wq}/\partial\lambda$ vanish due to the corresponding corner equations). The latter equation is equivalent to saying that $\partial L(q,\wq;\lambda)/\partial \lambda$ is an integral of motion for $F_\mu$. 
\end{proof}

The latter property is a re-formulation of the mysterious ``spectrality property'' discovered by Kuznetsov and Sklyanin for B\"acklund transformations \cite{KS}.

\section{Billiard in a quadric}
\label{sect: billiard}

We consider the billiard in an ellipsoid
\begin{equation}\label{eq: quadric}
\cQ=\left\{x\in \mathbb R^n: \langle x, A^{-1} x\rangle =\sum_{i=1}^n\frac{x_i^2}{a_i^2}=1\right\}.
\end{equation} 
Let $\{x_k\}_{k\in\mathbb Z}$, $x_k\in \cQ$, be an orbit of this billiard. Denote by
\begin{equation}
v_k=\frac{x_{k+1}-x_k}{|x_{k+1}-x_k|}\in S^{n-1}
\end{equation}
the unit vector along the line $(x_kx_{k+1})$. Then the following equations define the billiard map:
\begin{equation}\label{eq: billiard map}
B: \; \left\{ 
\begin{array}{l}
  x_{k+1}-x_k=\mu_k v_k ,  \\ \\
  v_k-v_{k-1}=\nu_k A^{-1}x_k.
\end{array}
\right.
\end{equation}
Here the numbers $\mu_k$, $\nu_k$ can be determined from the conditions $v_k\in S^{n-1}$, $x_k\in \cQ$, so that 
\begin{equation}\label{eq: mu nu}
\mu_k=|x_{k+1}-x_k|, \quad \nu_k=\langle v_k-v_{k-1}, A(v_k-v_{k-1})\rangle ^{1/2}.
\end{equation}
One can obtain alternative expressions for $\mu_k$, $\nu_k$ by the following arguments. Suppose that $x_k\in \cQ$, and determine $\mu_k$ from the condition that $x_{k+1}=x_k+\mu_k v_k\in \cQ$. This gives:
\[
\langle x_k+\mu_k v_k, A^{-1}(x_k+\mu_k v_k)\rangle =1\quad\Leftrightarrow\quad 
2\mu_k\langle x_k, A^{-1} v_k\rangle +\mu_k^2\langle v_k, A^{-1}v_k\rangle =0,
\]   
so that 
\begin{equation}\label{eq: mu alt}
\mu_k=-\frac{2\langle x_k, A^{-1} v_k\rangle}{\langle v_k, A^{-1}v_k\rangle}=\frac{2\langle x_{k+1}, A^{-1} v_k\rangle}{\langle v_k, A^{-1}v_k\rangle}.
\end{equation}
(The second expression follows in the same way by assuming that $x_{k+1}\in \cQ$ and requiring that $x_k=x_{k+1}-\mu_k v_k\in \cQ$.)

Similarly, suppose that $v_{k-1}\in S^{n-1}$ and require that $v_k=v_{k-1}+\nu_kA^{-1}x_k\in S^{n-1}$. This gives:
\[
\langle v_{k-1}+\nu_k A^{-1}x_k, v_{k-1}+\nu_k A^{-1} x_k)\rangle =1\quad\Leftrightarrow\quad 
2\nu_k\langle v_{k-1}, A^{-1} x_k\rangle +\nu_k^2\langle A^{-1} x_k, A^{-1} x_k\rangle =0,
\]   
so that 
\begin{equation}\label{eq: nu alt}
\nu_k=-\frac{2\langle v_{k-1}, A^{-1} x_k\rangle}{\langle A^{-1} x_k, A^{-1} x_k\rangle}=
\frac{2\langle v_k, A^{-1} x_k\rangle}{\langle A^{-1} x_k\, A^{-1} x_k\rangle}.
\end{equation}
By the way, these alternative expressions for $\mu_k$, $\nu_k$ immediately imply the following result.
\begin{prop}\label{prop: int}
The quantity $I=\langle x, A^{-1}v \rangle$ is an integral of motion of the billiard map.
\end{prop}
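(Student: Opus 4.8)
The plan is to prove that $I=\langle x_k, A^{-1}v_k\rangle$ takes the same value for all $k$ (and for all the commuting billiard maps, once they are introduced), using only the alternative expressions \eqref{eq: mu alt} and \eqref{eq: nu alt} that have just been derived. First I would observe that the claim amounts to the two identities $\langle x_k, A^{-1}v_k\rangle = \langle x_{k+1}, A^{-1}v_k\rangle$ and $\langle x_k, A^{-1}v_{k-1}\rangle = \langle x_k, A^{-1}v_k\rangle$; chaining them shows $I$ is constant along the orbit. Each of these is essentially a restatement of the geometric reflection law, but phrased purely algebraically.

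For the first identity, I would use the relation $x_{k+1}-x_k=\mu_k v_k$ together with the formula $\mu_k=-2\langle x_k, A^{-1}v_k\rangle/\langle v_k, A^{-1}v_k\rangle$ from \eqref{eq: mu alt}. Computing $\langle x_{k+1}, A^{-1}v_k\rangle = \langle x_k, A^{-1}v_k\rangle + \mu_k\langle v_k, A^{-1}v_k\rangle$ and substituting the value of $\mu_k$, the right-hand side collapses to $\langle x_k, A^{-1}v_k\rangle - 2\langle x_k, A^{-1}v_k\rangle = -\langle x_k, A^{-1}v_k\rangle$. This already shows $\langle x_{k+1}, A^{-1}v_k\rangle = -\langle x_k, A^{-1}v_k\rangle$, which is actually stronger than I expected: the sign flips. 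Let me reconsider — the quantity that is preserved is then $I_k = \langle x_k, A^{-1}v_k\rangle$ only up to sign at each reflection, so I should instead track $|\langle x, A^{-1}v\rangle|$, or note that the two sign flips (one from the $\mu$-step, one from the $\nu$-step) compose to the identity over a full billiard step $B$. Concretely, the second expression in \eqref{eq: mu alt}, $\mu_k = 2\langle x_{k+1}, A^{-1}v_k\rangle/\langle v_k, A^{-1}v_k\rangle$, combined with the first, directly gives $\langle x_{k+1}, A^{-1}v_k\rangle = -\langle x_k, A^{-1}v_k\rangle$; similarly \eqref{eq: nu alt} gives $\langle v_k, A^{-1}x_k\rangle = -\langle v_{k-1}, A^{-1}x_k\rangle$. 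So over one application of $B$, sending $(x_k,v_{k-1})$ to $(x_{k+1},v_k)$, one picks up exactly two sign changes, hence $\langle x_{k+1}, A^{-1}v_k\rangle = \langle x_k, A^{-1}v_{k-1}\rangle$, i.e. $I$ is genuinely invariant.

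I would write this up as: from the second equality in \eqref{eq: mu alt} compared with the first, $\langle x_{k+1},A^{-1}v_k\rangle = -\langle x_k,A^{-1}v_k\rangle$; from the second equality in \eqref{eq: nu alt} compared with the first, $\langle v_k,A^{-1}x_k\rangle = -\langle v_{k-1},A^{-1}x_k\rangle$. Combining, $\langle x_{k+1},A^{-1}v_k\rangle = -\langle x_k,A^{-1}v_k\rangle = \langle x_k,A^{-1}v_{k-1}\rangle$, so $I$ evaluated at step $k+1$ equals $I$ evaluated at step $k$. The main (and only) subtlety here is bookkeeping the two sign reversals correctly and making sure the pair of variables on which $B$ acts is the one for which $I$ is being evaluated; there is no analytic obstacle, the argument is a two-line algebraic manipulation once \eqref{eq: mu alt} and \eqref{eq: nu alt} are in hand. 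I would close by remarking that the same computation applies verbatim to any member of the family of confocal billiard maps, since the vectors $v$ entering the reflection are simply read off from whichever quadric is used, so $I=\langle x,A^{-1}v\rangle$ is a \emph{common} integral — which is the point needed later in Section \ref{sect: main}.
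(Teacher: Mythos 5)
Your argument is correct and is essentially the paper's own proof: comparing the two expressions in \eqref{eq: mu alt} gives $\langle x_{k+1},A^{-1}v_k\rangle=-\langle x_k,A^{-1}v_k\rangle$, comparing those in \eqref{eq: nu alt} gives $\langle x_k,A^{-1}v_k\rangle=-\langle x_k,A^{-1}v_{k-1}\rangle$, and the two sign flips cancel over one billiard step (the paper chains the shifted version, you the unshifted one --- same thing). Only your closing remark overreaches: the identical computation for $B_\mu$ yields invariance of $\langle x,(A+\mu I)^{-1}v\rangle$ with $x\in\cQ_\mu$, not of $\langle x,A^{-1}v\rangle$; that the latter type of quantity is a \emph{common} integral of all $B_\mu$ is precisely the nontrivial point that Section \ref{sect: main} establishes via the spectrality property, not by this two-line argument.
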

\begin{proof}
Comparing the both expressions in \eqref{eq: mu alt}, we find:
 \begin{equation}\label{eq: int aux 1}
\langle x_{k+1}, A^{-1}v_k\rangle  = - \langle x_k, A^{-1} v_k\rangle.
\end{equation}
Similarly, comparing the both expressions in from \eqref{eq: nu alt}, we find:
\begin{equation}\label{eq: int aux 2}
\langle x_k, A^{-1}v_k\rangle  = - \langle x_k, A^{-1} v_{k-1}\rangle.
\end{equation}
Combining \eqref{eq: int aux 1} with (the shifted version of) \eqref{eq: int aux 2}, we show the desired result.
\end{proof}

\smallskip
We are now in a position to give a Lagrangian formulation of the billiard map. Actually, there are two different such formulations. One of them is pretty well known. I do not know a reference for the second (``dual'' one), however, it is based on the so called skew hodograph transformation introduced by Veselov \cite{V, MV}.

\smallskip
{\bf First (traditional) Lagrangian formulation.} Eliminate variables $v_k$ from \eqref{eq: billiard map}:
\[
\frac{x_{k+1}-x_k}{\mu_k}-\frac{x_k-x_{k-1}}{\mu_{k-1}}=\nu_k A^{-1}x_k,
\]
or, according to the first expression in \eqref{eq: mu nu},
\begin{equation}
\frac{x_{k+1}-x_k}{|x_{k+1}-x_k|}-\frac{x_k-x_{k-1}}{|x_k-x_{k-1}|}=\nu_k A^{-1}x_k.
\end{equation}
This can be considered as the Euler-Lagrange equation for the discrete Lagrange function
\begin{equation}
L: \cQ\times \cQ\to\mathbb R, \quad L(x_k,x_{k+1})=|x_{k+1}-x_k|.
\end{equation}
Here, one can interpret $\nu_k$ as the Lagrange multiplier, which should be chosen so as to assure that $x_{k+1}\in \cQ$, provided $x_{k-1}\in \cQ$ and $x_k\in \cQ$.

\medskip

{\bf Second (``dual'') Lagrangian formulation.} Eliminate variables $x_k$ from \eqref{eq: billiard map}:
\[
\frac{A(v_{k+1}-v_k)}{\nu_{k+1}}-\frac{A(v_k-v_{k-1})}{\nu_k}=\mu_k v_k,
\]
or, according to the second expression in \eqref{eq: mu nu},
\begin{equation}
\frac{A(v_{k+1}-v_k)}{\langle v_{k+1}-v_k, A(v_{k+1}-v_k)\rangle ^{1/2}}-
\frac{A(v_k-v_{k-1})}{\langle v_k-v_{k-1}, A(v_k-v_{k-1})\rangle ^{1/2}}=\mu_k v_k.
\end{equation}
This can be considered as the Euler-Lagrange equation for the discrete Lagrange function
\begin{equation}
L: S^{n-1}\times S^{n-1}\to\mathbb R, \quad L(v_k,v_{k+1})=\langle v_{k+1}-v_k, A(v_{k+1}-v_k)\rangle ^{1/2}.
\end{equation}
Here, one can interpret $\mu_k$ as the Lagrange multiplier, which should be chosen so as to assure that $v_{k+1}\in S^{n-1}$, provided $v_{k-1}\in S^{n-1}$ and $v_k\in S^{n-1}$. 
\medskip

One can consider the billiard map as the map on the space $\cL$ of oriented lines in $\mathbb R^n$. This space can be parametrized s follows:
\[
\cL\ni \ell=\{x+tv: t\in \mathbb R\} \quad \leftrightarrow \quad (v,x)\in S^{n-1}\times \mathbb R^n.
\]
Of course, in this representation one is allowed to replace $x\in\ell$ by any other $x'=x+t_0v\in\ell$. A canonical choice of the representative $x'$  is the point on $\ell$ nearest to the origin 0, that is $x'=x-\langle x,v\rangle v$. Clearly, this representative can be considered as $x'\in T_v S^{n-1}\simeq T^*_v S^{n-1}$. Thus, one can identify $\cL$ with $T^*S^{n-1}$, and the billiard map can be considered as a map $B: T^* S^{n-1}\to T^* S^{n-1}$, with the generating (Lagrange) function $L:S^{n-1}\times S^{n-1} \to \mathbb R$. As a consequence, the map $B: T^* S^{n-1}\to T^* S^{n-1}$ preserves the canonical 2-form on $T^* S^{n-1}$.

\section{Commuting billiard maps}
\label{sect: main}

We use the following classical result (see \cite{T}). \footnote{A conjecture by Tabachnikov \cite{T} that the commutation of billiard maps characterizes confocal quadrics has been settled, under certain assumptions, in \cite{G}.}  
\begin{theo}
For any two quadrics $\cQ_\lambda$ and $\cQ_\mu$ from the confocal family
\begin{equation}
\cQ_\lambda=\left\{ x\in \mathbb R^n: Q_\lambda(x)=1\right\},
\end{equation}
where
\begin{equation}
Q_\lambda(x):=\langle x,(A+\lambda I)^{-1} x\rangle =\sum_{i=1}^n \frac{x_i^2}{a_i^2+\lambda},
\end{equation}
the corresponding maps $B_\lambda: T^* S^{n-1}\to T^* S^{n-1}$ and $B_\mu: T^* S^{n-1}\to T^* S^{n-1}$ commute.
\end{theo}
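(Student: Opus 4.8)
The plan is to prove commutativity of $B_\lambda$ and $B_\mu$ by exhibiting a lattice of segments that simultaneously realizes billiard orbits in $\cQ_\lambda$ and in $\cQ_\mu$, relying on the classical fact that a line segment tangent to a quadric of the confocal family, after reflection in any other confocal quadric, remains tangent to the \emph{same} caustic quadric. First I would recall and use the ``string construction'' / Poncelet-type picture: given an oriented line $\ell$ and a quadric $\cQ_\lambda$, the billiard reflection $B_\lambda$ replaces $\ell$ by the line $\ell'$ obtained by reflecting at the point $\ell\cap\cQ_\lambda$ (choosing the appropriate intersection point); the key classical lemma is that $\ell$ and $\ell'$ are tangent to exactly the same quadrics of the confocal family $\{\cQ_c\}$. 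Equivalently, the set of confocal quadrics tangent to a line is a complete set of invariants attached to a point of $\cL=T^*S^{n-1}$ that is preserved by every $B_c$.

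Next I would set up the square of Figure~\ref{Fig: consistency}(a) with $F_\lambda=B_\lambda$, $F_\mu=B_\mu$: start from a line $\ell=\ell(0,0)\in\cL$, apply $B_\lambda$ to get $\ell(1,0)$, apply $B_\mu$ to get $\ell(0,1)$, and the claim is that $B_\mu(\ell(1,0))=B_\lambda(\ell(0,1))=:\ell(1,1)$. Geometrically, $\ell(0,0)$ meets $\cQ_\lambda$ at a point $P$ and $\cQ_\mu$ at a point $R$; reflecting at $P$ gives $\ell(1,0)$, reflecting at $R$ gives $\ell(0,1)$. One then needs the segment of $\ell(1,0)$ to meet $\cQ_\mu$ at a point, call it $R'$, and the segment of $\ell(0,1)$ to meet $\cQ_\lambda$ at a point $P'$, such that reflecting $\ell(1,0)$ at $R'$ and reflecting $\ell(0,1)$ at $P'$ yield the \emph{same} line. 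This is precisely the content of the two-dimensional confocal Poncelet (Chasles) theorem: the quadrilateral with vertices $P,R,P',R'$ lying alternately on $\cQ_\lambda$ and $\cQ_\mu$ closes up because all four of its edges are tangent to the common family of caustics determined by $\ell(0,0)$; the reflection law at each vertex is equivalent to equal tangency of the two incident edges to every confocal quadric. I would phrase this as: the four lines are the four common tangents pattern forced by the involution on $\ell$ cut out by the pencil of confocal quadrics, together with the fact that reflection in $\cQ_\lambda$ is the geometric realization of the corresponding involution step.

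Concretely, I would carry this out analytically using the algebraic characterization of tangency: a line $\{x+tv\}$ is tangent to $\cQ_c$ iff the quadratic polynomial $Q_c(x+tv)-1$ in $t$ has a double root, i.e. iff a certain explicit function $\Phi_c(v,x)=\langle v,(A+cI)^{-1}v\rangle\langle x,(A+cI)^{-1}x\rangle - \langle v,(A+cI)^{-1}x\rangle^2 - \langle v,(A+cI)^{-1}v\rangle$ vanishes; more usefully, the generating function $F_c(v,x):=\langle v,(A+cI)^{-1}v\rangle^{-1}\big(\langle v,(A+cI)^{-1}x\rangle^2 - \langle v,(A+cI)^{-1}v\rangle(\langle x,(A+cI)^{-1}x\rangle-1)\big)$, regarded as a function of $c$, is invariant under $B_\lambda$ for every $\lambda$ — this is the classical statement that billiards in confocal quadrics share all caustics and is exactly analogous to Proposition~\ref{prop: int}. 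I would then argue that two oriented lines with the same value of this invariant function of $c$, and with the same reflection point on $\cQ_\lambda$ (resp. $\cQ_\mu$), must coincide, because the involution on the line induced by the confocal pencil has its fixed points at the tangency points of the caustics, so specifying one further incidence pins down the line. Chaining these statements around the square gives $\ell(1,1)$ well defined independently of the order, i.e. $B_\lambda\circ B_\mu=B_\mu\circ B_\lambda$.

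The main obstacle I expect is the bookkeeping of \emph{orientations and branch choices}: each billiard step involves choosing one of the two intersection points of a line with a quadric (and the correct sheet of the configuration), and the genuine content of the theorem is that these discrete choices can be made coherently around the square so that the two compositions land on the same oriented line — not merely on the same unoriented line. Handling this rigorously requires either invoking the real Poncelet closure theorem for two confocal conics in each $2$-plane spanned by the relevant segments (reducing the $n$-dimensional statement to the classical planar one, which is legitimate because the confocal structure restricts nicely to such planes), or doing the explicit involution computation on each line and tracking signs; the rest — tangency preservation, the algebraic form of the caustic integrals, symplecticity of each $B_c$ via its generating function $L(v_k,v_{k+1})$ from the dual Lagrangian formulation — is routine given the material already in the excerpt.
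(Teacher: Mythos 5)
First, a point of comparison: the paper gives \emph{no} proof of this theorem at all --- it is imported as a classical fact with a reference to Tabachnikov's book (plus a footnote on a related conjecture), so your attempt can only be judged on its own merits, not against an internal argument.

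On its own merits, the proposal has a genuine gap exactly at its central step. Preservation of the caustics (which you correctly identify, via the Joachimsthal-type quantity, as a classical lemma) does \emph{not} determine the reflected line, and therefore does not by itself force the quadrilateral to close. Through a given point $P\in\cQ_\lambda$ there are in general several distinct lines tangent to the $n-1$ confocal caustics attached to $\ell$ --- already in the plane there are two (the incoming and outgoing legs of a trajectory) --- so your claim that ``same caustics plus the same reflection point pins down the line'' is false, and selecting the correct branch is not bookkeeping: it \emph{is} the theorem, namely the double reflection theorem for confocal quadrics. Your proposed repair, reducing to the planar Poncelet/closure statement in ``the $2$-plane spanned by the relevant segments,'' does not work for $n\ge 3$: the four lines $\ell$, $B_\lambda\ell$, $B_\mu\ell$ and the putative common image are generically \emph{not} coplanar ($\ell$ and $B_\lambda\ell$ span a plane which does not contain $B_\mu\ell$), and the section of a confocal family of quadrics in $\mathbb R^n$ by a generic $2$-plane is not a confocal family of conics, so the planar statement cannot be applied to the sliced configuration. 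Moreover, the planar statement you invoke (Poncelet closure, i.e.\ periodicity of inscribed--circumscribed polygons) is not the one needed; what is needed is precisely the double reflection theorem, which is essentially equivalent to the commutativity being proved, so as written the argument is circular at its core. A genuine proof would have to establish the double reflection theorem directly (e.g.\ Darboux's argument and its honest $n$-dimensional extension), or take a different route altogether, such as proving the common integrals first and showing that both maps act as translations on the common Liouville tori.
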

This places the billiards in confocal quadrics into the context of the theory of one-dimensional pluri-Lagrangian systems.
With the help of this theory, we are going to prove the following statement \cite{M}.
\begin{theo}
The maps $B_\mu: T^* S^{n-1}\to T^* S^{n-1}$ have a set of common integrals of motion given by 
\begin{equation}
F_i(v,x)=v_i^2+\sum_{j\neq i} \frac{(x_iv_j-x_jv_i)^2}{a_i^2-a_j^2}, \quad 1\le i\le n.
\end{equation}
Only $n-1$ of them are functionally independent, due to $\sum_{i=1}^n F_i=\langle v,v\rangle=1$. 
\end{theo}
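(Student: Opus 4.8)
The plan is to cast the family $\{B_\lambda\}$ of commuting billiard maps as a discrete one-dimensional pluri-Lagrangian system and then read off the integrals from the spectrality statement, Theorem~\ref{th: spectrality}. Since the machinery of Section~\ref{sect: pluri} requires one fixed configuration space, I would use the \emph{second} (``dual'') formulation of Section~\ref{sect: billiard}: the billiard in $\cQ_\lambda$, whose defining matrix is $A+\lambda I$, becomes a map $B_\lambda:T^*S^{n-1}\to T^*S^{n-1}$ with generating (Lagrange) function
\[
L(v,\wv;\lambda)=\big\langle\wv-v,(A+\lambda I)(\wv-v)\big\rangle^{1/2},
\]
obtained from the displayed formula in Section~\ref{sect: billiard} by the substitution $A\mapsto A+\lambda I$. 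By the commutation theorem quoted above these maps pairwise commute and are symplectic, so Section~\ref{sect: pluri} applies verbatim with $M=S^{n-1}$ and $q=v$; in particular we have the corner equations and Theorems~\ref{th: almost closed} and \ref{th: spectrality}.

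The central step is to prove that the $1$-form $L$ is closed on solutions, i.e.\ that the constant $c(\lambda,\mu)$ of Theorem~\ref{th: almost closed} vanishes. On an orbit the reflection law reads $(A+\lambda I)(\wv-v)=\nu\,x$ with $x\in\cQ_\lambda$ the reflection point and $\nu$ as in \eqref{eq: nu alt} (with $A\mapsto A+\lambda I$), and one checks at once that then $L(v,\wv;\lambda)=\nu=\langle\wv-v,x\rangle$. Plugging this into \eqref{eq: BT closure} and introducing the four reflection points of an elementary square of $\Z^2$ — two of them ($P$, $S$) on $\cQ_\lambda$ and two ($Q$, $R$) on $\cQ_\mu$, each lying on exactly two of the four oriented lines of the square — one regroups the eight resulting terms by direction vector and reduces $dL(\sigma)$ to a signed sum of the four side-lengths of the quadrilateral with vertices $P,Q,R,S$ (each side lying on one of the lines). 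Vanishing of this signed perimeter is then to be deduced from the incidence structure (so that each vertex difference is parallel to the relevant direction vector), the Joachimsthal-type relations $\langle v+\wv,(A+\lambda I)^{-1}P\rangle=0$ extracted from the proof of Proposition~\ref{prop: int}, and the confocality of $\cQ_\lambda$ and $\cQ_\mu$ (entering through the single vector identity obtained by subtracting the two reflection equations along the $\cQ_\lambda$-edges of the square). Disentangling this combinatorics of four lines, four points and two confocal quadrics is the step I expect to be the main obstacle; everything around it is routine.

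Granting closure, Theorem~\ref{th: spectrality} delivers at once that $\partial L(v,\wv;\lambda)/\partial\lambda$ is a common integral of motion of all $B_\mu$; it remains to recognise it. Differentiating $L^2=\langle\wv-v,(A+\lambda I)(\wv-v)\rangle$ gives $\partial L/\partial\lambda=|\wv-v|^2/(2L)$, and substituting $\wv-v=\nu(A+\lambda I)^{-1}x$ together with \eqref{eq: nu alt} collapses this to
\[
\frac{\partial L(v,\wv;\lambda)}{\partial\lambda}=-\big\langle v,(A+\lambda I)^{-1}x\big\rangle,
\]
the Joachimsthal quantity of Proposition~\ref{prop: int} for $\cQ_\lambda$; its square is insensitive to which of the two intersection points of the line with $\cQ_\lambda$ is taken for $x$. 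A short partial-fraction computation now yields the identity
\[
\sum_{i=1}^n\frac{F_i(v,x)}{a_i^2+\lambda}=\big\langle v,(A+\lambda I)^{-1}v\big\rangle\big(1-Q_\lambda(x)\big)+\big\langle v,(A+\lambda I)^{-1}x\big\rangle^2,
\]
valid for all $v,x$; when $x\in\cQ_\lambda$ the first term drops and the right-hand side becomes $(\partial L/\partial\lambda)^2$. Moreover the left-hand side is a function of the oriented line alone, since $F_i(v,x)$ depends on $x$ only through the combinations $x_iv_j-x_jv_i$. Hence the rational function $\lambda\mapsto\sum_iF_i/(a_i^2+\lambda)$ of the phase point is $B_\mu$-invariant for every $\mu$; taking the residue at $\lambda=-a_i^2$ shows that each $F_i$ is a common integral of motion, and $\sum_iF_i=\langle v,v\rangle=1$ because $\sum_{i\neq j}(x_iv_j-x_jv_i)^2/(a_i^2-a_j^2)=0$ by antisymmetry, which is the single functional relation among them.
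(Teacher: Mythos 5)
Your overall strategy is the paper's: use the dual Lagrangian $L(v,\wv;\lambda)=\langle\wv-v,(A+\lambda I)(\wv-v)\rangle^{1/2}$, show the pluri-Lagrangian $1$-form is closed on solutions, invoke Theorem~\ref{th: spectrality} to get $\partial L/\partial\lambda=\pm\langle v,(A+\lambda I)^{-1}x\rangle$ as a common integral, and then the partial-fraction identity $\sum_i F_i/(\lambda+a_i^2)=Q_\lambda(v)\bigl(1-Q_\lambda(x)\bigr)+Q_\lambda^2(x,v)$ to convert it into the representative-independent integrals $F_i$. That second half of your argument (including the observation that $F_i$ depends on $x$ only through $x_iv_j-x_jv_i$, the residue argument, and $\sum_iF_i=1$) is correct and coincides with the paper's computation.

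The genuine gap is the central step, and you flag it yourself: you never prove $c(\lambda,\mu)=0$. Your reduction of $dL(\sigma)$ to a signed sum of the four side-lengths of the quadrilateral cut out on the four lines by the two confocal quadrics is correct (group the eight terms $\langle\wv-v,x\rangle$ by direction vector, and each difference of reflection points is parallel to the corresponding $v$), but the vanishing of that alternating perimeter for a \emph{generic} elementary square is essentially equivalent to the closure statement itself, and the combinatorial/Joachimsthal/confocality argument you sketch is exactly the part you admit you cannot yet disentangle. The paper avoids this entirely with one observation you did not use: by Theorem~\ref{th: almost closed}, $dL(\sigma)$ is already known to be \emph{constant} on the solution manifold, so it suffices to evaluate it on a single convenient solution. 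Taking the degenerate trajectory along the major axis, $v=(1,0,\ldots,0)$, $\wv=\whv=-v$, $\widehat{\wv}=v$, the four terms cancel in pairs because $L(v,\wv;\lambda)$ depends on $(v,\wv)$ only through the quadratic form in $\wv-v$, so $L(v,-v;\lambda)=L(-v,v;\lambda)$; hence $c(\lambda,\mu)=0$ with no geometry of generic quadrilaterals needed. Without this (or a completed version of your direct geometric argument), your proof is incomplete at precisely the point on which everything else depends.
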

\begin{proof}
According to Theorem \ref{th: almost closed}, the expression
\[
dL(\lambda,\mu):=L(v,\wv;\lambda)+L(\wv, \widehat{\wv};\mu)-L(\whv,\widehat{\wv};\lambda)-L(v,\whv;\mu)
\]
is a constant (depending maybe on $\lambda$, $\mu$). The value of this constant is easily determined on a concrete billiard trajectory aligned along the big axis of either of the ellipsoids $\cQ_\lambda$, $\cQ_\mu$. For such a trajectory, $v=(1,0,\ldots,0)$, $\wv=\whv=-v$, and $\widehat{\wv}=v$. Recall that 
\[
L(v,\wv;\lambda)=\langle \wv-v, (A+\lambda I)(\wv-v)\rangle^{1/2}.
\]
There follows immediately that $dL(\lambda,\mu)=0$. Now Theorem \ref{th: spectrality} implies that  the quantity $\partial L(v,\wv;\lambda)/\partial \lambda$ is a common integral of motion for all $B_\mu$. A direct computation gives:
\begin{align*}
\frac{\partial L(\undertilde{v},v;\lambda)}{\partial \lambda} & =
\frac{\langle v-\undertilde{v} , v-\undertilde{v}\rangle}{\langle v-\undertilde{v}, (A+\lambda I)(v-\undertilde{v})\rangle^{1/2}} \\
 & = \frac{\nu^2\langle (A+\lambda I)^{-1} x,(A+\lambda I)^{-1} x \rangle}{\nu} \qquad ({\rm used \;\; eqs. \;\;  \eqref{eq: billiard map}, \eqref{eq: mu nu}}) \\
 & = \nu \langle (A+\lambda I)^{-1} x,(A+\lambda I)^{-1} x \rangle    \\ 
 & =  2 \langle x,(A+\lambda I)^{-1} v \rangle  \qquad\qquad\qquad\qquad ({\rm used \;\; eq. \;\;  \eqref{eq: nu alt}}).
\end{align*}
Thus, the quantity 
\[
Q_\lambda(x,v):=\langle x,(A+\lambda I)^{-1} v\rangle =\sum_{i=1}^n \frac{x_iv_i}{\lambda+a_k^2}
\]
with $v\in S^{n-1}$, $x\in \cQ_\lambda$, is an integral of motion of all maps $B_\mu:\cL\to\cL$ (compare with Proposition \ref{prop: int}). However, for the map $B_\mu$, the parametrization of the line $\ell=\{x+tv: t\in\mathbb R\}\in\cL$ by means of a point $x\in \ell\cap \cQ_\lambda$ is unnatural and rather inconvenient. Actually, it would be preferable to take, for any $B_\mu$, a representative from $\ell\cap \cQ_\mu$, but a still better option would be an expression not depending on the representative at all. This is easily achieved. Observe that, as soon as $Q_\lambda(x)=1$, we have
\[
Q_\lambda^2(x,v)=Q_\lambda(v)-Q_\lambda(v)Q_\lambda(x)+Q_\lambda^2(x,v),
\]
and the combination of the last two terms on the right-hand side is invariant under the change of the representative $x\mapsto x+tv$:
\begin{align*}
Q_\lambda^2(x,v)-Q_\lambda(v)Q_\lambda(x) 
 & =  \sum_{i,j=1}^n \frac{x_iv_ix_jv_j-x_i^2v_j^2}{(\lambda+a_i^2)(\lambda+a_j^2)}
  = \sum_{1\le i\neq j\le n} \frac{x_iv_ix_jv_j-x_i^2v_j^2}{(\lambda+a_i^2)(\lambda+a_j^2)}\\
 & = \sum_{1\le i\neq j\le n} \frac{1}{a_j^2-a_i^2}\left(\frac{1}{\lambda+a_i^2}-\frac{1}{\lambda+a_j^2}\right)(x_iv_ix_jv_j-x_i^2v_j^2)\\
 & = \sum_{1\le i\neq j\le n} \frac{1}{a_j^2-a_i^2}\cdot\frac{1}{\lambda+a_i^2}\cdot (x_iv_ix_jv_j-x_i^2v_j^2+x_jv_jx_iv_i-x_j^2v_i^2)\\
 & = \sum_{i=1}^n \frac{1}{\lambda+a_i^2} \sum_{j\neq i} \frac{(x_iv_j-x_jv_i)^2}{a_i^2-a_j^2}.
\end{align*}
As a result, we see that the maps $B_\mu:\cL\to\cL$ have the following integral of motion:
\[
\sum_{i=1}^n \frac{1}{\lambda+a_i^2}\left(v_i^2+\sum_{j\neq i} \frac{(x_iv_j-x_jv_i)^2}{a_i^2-a_j^2}\right)=\sum_{i=1}^n \frac{F_i}{\lambda+a_i^2}.
\]
Of course, this holds true also for each $F_i$ individually.
\end{proof}

\section{Acknowledgement} This research was supported by the DFG Collaborative Research Center TRR 109 ``Discretization in Geometry and Dynamics''.


\begin{thebibliography}{}

\bibitem{G}
A. Glutsyuk. On 4-reflective complex analytic planar billiards. {\em Preprint} arXiv:1405.5990 [math.DS].

\bibitem{KS} 
V.B. Kuznetsov, E.K. Sklyanin. On B\"acklund transformations for many-body systems. {\em J. Phys. A: Math. Gen.} {\bf 31} (1998) 2241--2251.

\bibitem{M}
J. Moser. Geometry of quadrics and spectral theory. In: {\em Chern symposium 1979}, Springer, 1980, 147--188.

\bibitem{MV} 
J. Moser, A.P. Veselov. Discrete versions of some classical integrable systems and factorization of matrix polynomials. {\em Commun. Math. Phys.} {\bf 139} (1991) 217--243.

\bibitem{S} 
Yu. B. Suris. Variational formulation of commuting Hamiltonian flows: multi-time Lagrangian 1-forms. {\em J. Geom. Mech.} {\bf 5} (2013) 365--379.

\bibitem{T}
S. Tabachnikov. {\em Geometry and Billiards}. AMS, 2005.

\bibitem{V}
A.P. Veselov. Integrable discrete-time systems and difference operators.  {\em Funct. Anal. Appl.} {\bf 22} (1988), 83--93.

\end{thebibliography}
\end{document}